\newtheorem{theorem}{Theorem}
\theoremstyle{definition}
\newcommand{\A}{\mathcal{A}}
\newcommand{\G}{\mathcal{G}}
\definecolor{myblue}{HTML}{0088cc}
\definecolor{myorange}{HTML}{f26924}
\newcommand{\Rand}{{\operatorname{rand}}}
\newcommand{\Det}{{\operatorname{det}}}
\newcommand{\PP}{\mathcal{P}}
\newcommand{\RR}{\mathcal{R}}
\newcommand{\id}{\operatorname{id}}
\newcommand{\nat}{\mathbb{N}}
\newcommand{\inSigma}{\Sigma_{\operatorname{in}}}
\newcommand{\outSigma}{\Sigma_{\operatorname{out}}}
\newcommand{\inLbl}{\lambda_{\operatorname{in}}}
\newcommand{\outLbl}{\lambda_{\operatorname{out}}}
\newcommand{\Anorm}{\A_\text{norm}}
\newcommand{\Aexist}{\A}
\newenvironment{mycover}
{\list{}{\listparindent 0pt
        \itemindent    \listparindent
        \leftmargin    0cm
        \rightmargin   0cm
        \parsep        0pt}\raggedright
    \item\relax}
{\endlist}
\newenvironment{myabstract}
{\list{}{\listparindent 1.5em\itemindent    \listparindent
        \leftmargin    0cm
        \rightmargin   0cm
        \parsep        0pt}\item\relax}
{\endlist}
\newcommand{\myaff}[1]{\,$\cdot$\, {\small #1}\par\smallskip}
\newcommand{\fakeparagraph}[2]{\par\noindent\textbf{#1}\hspace{1em}#2}
\begin{document}

\begin{mycover}
    {\huge\bfseries Distributed derandomization revisited}

    \bigskip
    \bigskip
    \bigskip
    \textbf{Sameep Dahal}
    \myaff{Aalto University, Finland}

    \textbf{Francesco d'Amore}
    \myaff{Aalto University, Finland}

    \textbf{Henrik Lievonen}
    \myaff{Aalto University, Finland}

    \textbf{Timothé Picavet}
    \myaff{Aalto University, Finland
    \,$\cdot$\,
    ENS de Lyon, France}

    \textbf{Jukka Suomela}
    \myaff{Aalto University, Finland}
\end{mycover}
\bigskip
\begin{myabstract}
    \fakeparagraph{Abstract.}
    One of the cornerstones of the distributed complexity theory is the derandomization result by Chang, Kopelowitz, and Pettie [FOCS 2016]: any randomized LOCAL algorithm that solves a locally checkable labeling problem (LCL) can be derandomized with at most exponential overhead.
    The original proof assumes that the number of random bits is bounded by some function of the input size. We give a new, simple proof that does not make any such assumptions---it holds even if the randomized algorithm uses infinitely many bits. While at it, we also broaden the scope of the result so that it is directly applicable far beyond LCL problems.
\end{myabstract}
\bigskip
\bigskip

\section{Introduction}

\paragraph{Distributed derandomization.}

A long line of recent work has led to a near-complete understanding of the distributed computational complexity of \emph{locally checkable labeling problems} (LCLs) \cite{suomela2020landscape}. These are graph problems that can be defined by giving a finite list of feasible local neighborhoods \cite{naor-stockmeyer1995}; for example, $c$-coloring in graphs of maximum degree $\Delta$ (for some fixed $c$ and $\Delta$) is an LCL problem.

We are in particular interested in the round complexity of LCLs in two standard models of distributed computing: deterministic and randomized versions of the LOCAL model \cite{linial,peleg_book}. One of the cornerstones of the distributed complexity theory is the derandomization result by Chang, Kopelowitz, and Pettie \cite[Theorem 3.1]{shattering}:

\begin{theorem}[Chang, Kopelowitz, and Pettie]\label{thm:main}
Let $\A_\Rand$ be a randomized LOCAL algorithm that solves an LCL problem $\PP$ in $T_\Rand(n)$ communication rounds in $n$-node graphs with probability at least $1-1/n$. Then there is a deterministic LOCAL algorithm $\A_\Det$ that solves $\PP$ in $T_\Det(n)$ rounds, where
$T_\Det(n) = T_\Rand\bigl(2^{n^2}\bigr)$.
\end{theorem}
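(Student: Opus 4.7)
The plan is to replace the method of conditional probabilities used by Chang, Kopelowitz, and Pettie with a direct probabilistic-existence argument. The original approach enumerates random bit choices node-by-node and so implicitly requires the number of bits per node to be finite; the argument below instead works entirely in the product probability space $\prod_{u} \{0,1\}^{\nat}$ and therefore removes that restriction.

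First I would set $N = 2^{n^2}$ and $T = T_\Rand(N)$, so that the guarantee for $\A_\Rand$ at the inflated input size gives failure probability at most $1/N$ in $T$ rounds on any $N$-node graph. Second, I would transfer this bound to $n$-node inputs: for any $n$-node graph $G$ with IDs from some fixed polynomial range $[n^c]$, form the $N$-node graph $G'$ by adding $N-n$ isolated vertices with fresh IDs. Because each $T$-ball in $G'$ centered at a node of $G$ lies entirely inside $G$, correctness of $\A_\Rand$ on $G$ depends only on the random bits of the IDs that appear in $G$, and so failure on $G$ occurs with probability at most $1/N$. Third, I would union-bound over all $n$-node labeled graphs on IDs in $[n^c]$: their number is at most $\binom{n^c}{n} \cdot 2^{\binom{n}{2}} \leq n^{cn} \cdot 2^{n^2/2} < 2^{n^2} = N$ for all sufficiently large $n$. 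Hence, drawing $\rho : [n^c] \to \{0,1\}^{\nat}$ by assigning each ID an independent uniform infinite bit string,
\[
\Pr_\rho\bigl[\A_\Rand(\rho) \text{ fails on some } n\text{-node input}\bigr] < 1,
\]
so a good deterministic assignment $\rho^*$ exists. Finally, I would let $\A_\Det$ hardcode $\rho^*$: each node collects its $T$-ball in $T = T_\Rand(2^{n^2})$ rounds and outputs what $\A_\Rand$ would produce given that ball and $\rho^*$ restricted to its IDs.

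The main obstacle lies in the union-bound step and is exactly what prevents the original proof from handling infinite bits: one has to argue that $\rho^*$ exists even though each coordinate of $\rho$ is now an infinite bit string. The point is that, for any reasonable algorithm, the failure event on a fixed graph is measurable in the product probability space, so the union bound and the ``positive measure implies non-empty complement'' reasoning remain valid without any discretization of the randomness---there is nothing to enumerate. A secondary technicality is the counting: the number of labeled $n$-node graphs must stay strictly below $N$, which is the reason for the precise choice $N = 2^{n^2}$ and for restricting attention to polynomially sized ID ranges, as is standard in LOCAL.
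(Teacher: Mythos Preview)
Your probabilistic-existence step (set $N=2^{n^2}$, lie about the size, union-bound over all $n$-node inputs, conclude a good assignment $\rho^*$ exists) is essentially identical to the first half of the paper's proof of \cref{thm:main-new}. The divergence is in the final step. You hardcode the infinite object $\rho^*\colon [n^c]\to\{0,1\}^{\nat}$ into $\A_\Det$; this does define a deterministic LOCAL algorithm in the permissive sense where an ``algorithm'' is any function from radius-$T$ balls to outputs, so your argument proves the theorem as literally stated. The paper, however, wants $\A_\Det$ to be \emph{uniform and computable} (as in the original Chang--Kopelowitz--Pettie result), and this is exactly where hardcoding $\rho^*$ fails: $\rho^*$ is infinite data, depends on $n$, and cannot be located by any finite search. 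The paper's additional idea---its actual contribution---is to pass from $\A_\Rand[\rho^*]$ to its \emph{normal form} $\Anorm$, the finite table mapping each possible radius-$T$ neighborhood over IDs in $[n^c]$ and labels in $\inSigma$ to an output in $\outSigma$. For fixed $n$ there are only finitely many such tables, so each node can enumerate them in lexicographic order and pick the first one that solves $\PP$ on every input in $\G_n$. So where you write ``there is nothing to enumerate,'' the paper's point is precisely that there \emph{is} something finite to enumerate---not random-bit assignments, but input--output tables---and this is what recovers a uniform, computable $\A_\Det$ without any bound on the number of random bits. (Minor omission: your input count should also include the input labelings, a factor $|\inSigma|^n$; the bound $|\G_n|<2^{n^2}$ still holds for large $n$.)
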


But what do we mean, precisely, when we say that $\A_\Rand$ is a randomized algorithm in the LOCAL model? Chang, Kopelowitz, and Pettie \cite{shattering} assume that there is some upper bound $r(n)$ on the number of random bits used by a node. This is a non-standard definition; while many reasonable algorithms naturally satisfy this, formally speaking it is not compatible with e.g.\ a randomized algorithm in which each node picks a number from a geometric distribution by repeated Bernoulli trials. All other results that build on \cref{thm:main} are also influenced by this assumption; the foundations of the field are on a bit shaky ground.

\paragraph{New result: unbounded randomness.}

In this short note we prove a stronger version of \cref{thm:main}. Our proof does not need to assume anything about the number of random bits consumed by a node. Hence, we can now safely conclude that all corollaries of \cref{thm:main} also hold in the standard randomized LOCAL model, in which local computation---including the number of random bits generated---is unbounded.

Similar to \cite{shattering}, we assume that $n$ and $T_\Rand$ (or sufficiently tight bounds on them) are known. Similar to \cite{shattering}, the proof is constructive and $\A_\Det$ is a uniform, computable, deterministic algorithm. The only difference is that we assume less about $\A_\Rand$.

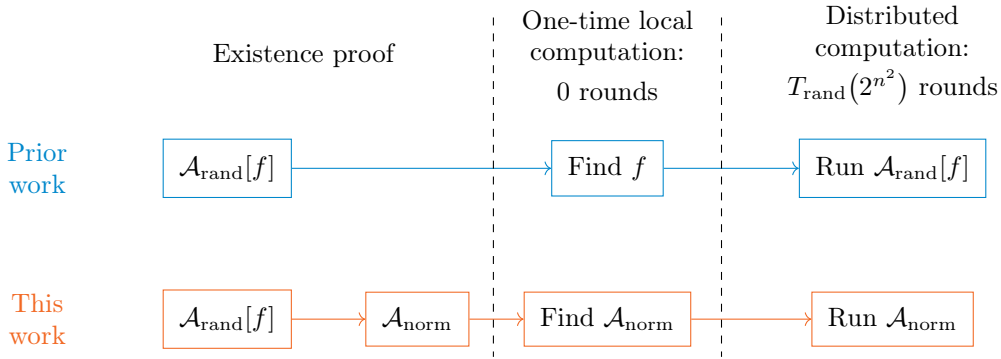
\begin{figure}
    \centering
    \begin{tikzpicture}[every node/.style={draw,rectangle,align=center,inner sep=6}]
        \node[draw=none,text=myblue] (1) at (0.5, 2) {Prior\\work};
        \node[draw=none,text=myorange] (2) at (0.5, 0) {This\\work};
        \node[draw=myblue] (3a) at (3, 2) {$\A_\Rand[f]$};
        \node[draw=myorange] (3b) at (3, 0) {$\A_\Rand[f]$};
        \node[draw=myorange] (5) at (5.5, 0) {$\A_\text{norm}$};
        \node[draw=myblue] (4) at (8, 2) {Find $f$};
        \node[draw=myorange] (6) at (8, 0) {Find $\A_\text{norm}$};
        \node[draw=myblue] (7) at (11.75, 2) {Run $\A_\Rand[f]$};
        \node[draw=myorange] (8) at (11.75, 0) {Run $\A_\text{norm}$};
        \draw[draw=myblue,->] (3a) to (4);
        \draw[draw=myorange,->] (3b) to (5);
        \draw[draw=myorange,->] (6) to (8);
        \draw[draw=myorange,->] (5) to (6);
        \draw[draw=myblue,->] (4) to (7);
        \draw[dashed] (6.5, 4) to (6.5,-0.5);
        \draw[dashed] (9.5, 4) to (9.5,-0.5);
        \node[draw=none] at (4, 3.5) {Existence proof};
        \node[draw=none] at (8, 3.5) {One-time local\\computation:\\[3pt]0 rounds};
        \node[draw=none] at (11.75, 3.5) {Distributed\\computation:\\[3pt]$T_\Rand\bigl(2^{n^2}\bigr)$ rounds};
    \end{tikzpicture}
    \caption{Proof strategy in this work and prior work~\cite{shattering}.}
    \label{fig:strategy}
\end{figure}

\paragraph{Key new ideas.}

Exactly like Chang, Kopelowitz, and Pettie \cite[Theorem 3.1]{shattering}, we start by defining $N = 2^{n^2}$. Then even though we are working in an $n$-node graph, we lie to $\A_\Rand$ that we have a graph with $N$ nodes. The running time increases to $T_\Rand(N)$, but the success probability improves to $1 - 1/N$, which is large enough to show that there exists a mapping $f$ from unique identifiers to random bits that works for every $n$-node graph.

At this point our paths deviate---see \Cref{fig:strategy} for an illustration. In \cite[Theorem 3.1]{shattering}, $\A_\Det$ is constructed as follows: Each node checks each possible mapping $f$, and picks the first one that works for every $n$-node graph; then $\A_\Det$ simply simulates $\A_\Rand$ with random bits from $f$. This is where they make use of bounded randomness: for a fixed $n$ there are only finitely many possible functions $f$ to check.

We proceed as follows---instead of looking at the \emph{internal behavior} of the algorithm we look at its \emph{external behavior}:
\begin{enumerate}
    \item Since a good mapping $f$ exists, we could in principle hard-code this specific mapping to obtain a deterministic algorithm $\Aexist = \A_\Rand[f]$. At this point we merely know that $\Aexist$ exists---this step is non-constructive, and $\Aexist$ might not even be computable.
    \item However, any deterministic LOCAL algorithm can be represented in a \emph{normal form} as a function $\Anorm$ that maps each possible $T_\Rand(N)$-radius neighborhood to a local output. Since $\Aexist$ exists, we know that such a function $\Anorm$ also exists and solves $\PP$ correctly in all $n$-node graphs.
    \item Now $\A_\Det$ simply finds the first valid $\Anorm$, and then simulates $\Anorm$.
\end{enumerate}
This way we can construct a computable, uniform, deterministic algorithm $\A_\Det$ even if we merely know that $\A_\Rand$ exists, and even if $\A_\Rand$ is non-computable or non-uniform.

\paragraph{Two extensions.}

While \cref{thm:main} was originally presented for LCL problems, our new proof works for a broader class of problems: we show how to handle labeling problems that are defined component-wise. The proof is given in \cref{sec:proof}; \cref{thm:main} then follows as a special case.

We also briefly discuss in \cref{sec:connected} one extension: how to derandomize algorithms that are only guaranteed to work in connected graphs. A bit more care is needed when we lie about the number of nodes in that case.

\section{Preliminaries}
Let \(G = (V,E)\) denote a simple undirected graph.
For any two nodes \(u,v \in V\), we denote their distance by \(d(u,v)\), i.e., the number edges in a shortest path connecting \(u\) to \(v\); if such path does not exist, then \(d(u,v) = +\infty\). 
Furthermore, by \(\deg({v})\) we denote the degree of \(v\), i.e., the number of incident edges.

\paragraph{LOCAL model.}
Let \(G = (V,E)\) be any graph with \(n\) nodes. 
In the \emph{deterministic LOCAL model}, each node $v \in V$ is given a unique identifier \(\id(v) \in \{1,2, \dots, n^c\}\) for some constant \(c \ge 1\). 
The initial knowledge of a node consists of its own identifier, its degree,  the number of nodes \(n\) and (possibly) an input label.
Each node runs the same algorithm and computation proceeds in synchronous rounds.
In each round, nodes send messages of arbitrary size to their neighbors, then receive some messages, and then perform local computations of arbitrary complexity.
After some number of rounds, a node must terminate its computation and decide on its local output.
The running time (or complexity) of a distributed algorithm is defined as the number of rounds needed by all nodes to decide the local output.

In the \emph{randomized LOCAL model}, each node is also given access to an infinite random bit stream, and the bit streams of the nodes are mutually independent.
We say that an algorithm is \emph{uniform} if the size of the description of the algorithm does not depend on $n$.

For any fixed locality~$T$, the LOCAL model can also be viewed as a mapping from each radius-$T$ neighborhood~$N_T[v]$ of each node~$v$ to a local output.
Here by $N_T[v]$ we mean the graph~$(V',E')$, where $V' \subseteq V$ is the set of all nodes~$u \in V(G)$ with $d(v, u) \leq T$ and $E'$ is the set of edges $\{s,t\} \in E$ with $d(v,s)\leq T-1$ and $d(v,t)\leq T$. Each node of $N_T[v]$ is also labeled with its original degree~$\deg(u)$, unique identifier~$\id(u)$, local input, and---for randomized algorithms---its stream of random bits.
This is exactly the information node~$v$ can gather in $T$ rounds.

\paragraph{Labeling problems.}
Let \(\inSigma\) be a finite set of input labels and \(\outSigma\) be a finite set of admissible output labels.
An \emph{input labeling} of a graph $G = (V,E)$ is a function $\inLbl\colon V \to \inSigma$, and an \emph{output labeling} is a function $\outLbl\colon V \to \outSigma$.
A \emph{labeling problem} \(\PP\) specifies for each graph and each input labeling a set of feasible output labelings.

We say that $\PP$ is a \emph{component-wise verifiable problem} if for each graph $G$ and each connected component $C$ of $G$, the set of valid output labelings restricted to $C$ only depends~on~$C$.

Let $r \in \nat$ be a constant. We say that $\PP$ is a \emph{locally verifiable problem} with verification radius $r$ if for each graph $G$ and each node $v$ of $G$, the set of valid output labelings restricted to $N_r[v]$ only depends on $N_r[v]$.

We note that LCL problems \cite{naor-stockmeyer1995} are a special case of locally verifiable problems with a constant bound on the degree of the nodes.
Locally verifiable problems are in turn a special case of component-wise verifiable problems.

\section{Main result}\label{sec:proof}

We give the derandomization result directly for component-wise verifiable problems; \cref{thm:main} then follows as a corollary.

\begin{theorem}\label{thm:main-new}
    Let $\A_\Rand$ be a randomized LOCAL algorithm that solves a component-wise verifiable problem $\PP$ in $T_\Rand(n)$ communication rounds in $n$-node graphs with probability at least $1-1/n$. Then there is a deterministic LOCAL algorithm $\A_\Det$ that solves $\PP$ in $T_\Det(n)$ rounds, where $T_\Det(n) = T_\Rand\bigl(2^{n^2}\bigr)$.
\end{theorem}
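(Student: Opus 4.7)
The plan is to follow the three-step strategy outlined in the introduction and illustrated in \Cref{fig:strategy}, adapted to component-wise verifiable problems. First, set $N = 2^{n^2}$ and run $\A_\Rand$ as if the graph had $N$ nodes. On any fixed $n$-node input instance---specified by an identifier injection into $\{1,\dots,n^c\}$, an edge set, and an input labeling in $\inSigma^n$---the failure probability is at most $1/N$. The total number of such instances is at most $n^{cn} \cdot 2^{\binom{n}{2}} \cdot |\inSigma|^n$, which is strictly less than $N = 2^{n^2}$ for all sufficiently large $n$ (the finitely many small $n$ can be hard-coded into $\A_\Det$). A union bound then yields a fixed assignment $f$ of infinite random-bit streams to identifiers such that $\Aexist = \A_\Rand[f]$ produces a valid labeling on \emph{every} $n$-node instance. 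This step is purely existential---$f$ need not be computable---which is precisely why the earlier strategy of enumerating all possible $f$'s breaks down in the unbounded-randomness regime.

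Next, set $T = T_\Rand(N) = T_\Det(n)$ and switch from the internal representation of $\Aexist$ to its external \emph{normal form}: the function $\Anorm$ that maps each possible radius-$T$ rooted neighborhood (with identifiers in $\{1,\dots,n^c\}$, degrees, and input labels) to an output label. On $n$-node graphs, such neighborhoods have at most $n$ nodes drawn from fixed finite alphabets, so only finitely many of them exist and hence only finitely many candidate functions $\Anorm$ exist. At least one candidate is correct, namely the normal form of $\Aexist$.

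The deterministic algorithm $\A_\Det$ enumerates these candidates in some canonical order and, for each, tests correctness by simulating it on every $n$-node instance and verifying the resulting output labeling. Component-wise verifiability is exactly what makes this test computable: the validity of the output on each connected component depends only on that component, and components have at most $n$ nodes, so the verifier reduces to a finite check on finite labeled graphs. The first candidate that passes is selected, and each node of the real graph then collects its radius-$T$ neighborhood in $T_\Rand(N)$ rounds and applies $\Anorm$, giving the claimed bound $T_\Det(n) = T_\Rand\bigl(2^{n^2}\bigr)$.

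The conceptual crux, and the step I expect to be hardest to formalize cleanly, is the switch to the normal form: it is tempting but unnecessary to try to hard-code the random seed $f$ itself. By instead hard-coding the input--output behavior $\Anorm$, we trade an a priori infinite object (the random-bit assignment) for an a priori finite one (a function on a finite neighborhood space), so computability and uniformity of $\A_\Det$ come essentially for free. Component-wise verifiability is then exactly the minimal assumption needed to ensure that the correctness check performed inside the search is a finite computation, which is also what lets the theorem extend beyond LCLs.
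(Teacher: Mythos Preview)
Your proposal is correct and follows essentially the same approach as the paper: set $N=2^{n^2}$, use a union bound over fewer than $N$ instances to show a good $f$ exists, then enumerate finitely many normal-form mappings $\Anorm$ rather than random-bit assignments, and simulate the first one that works on all of $\G_n$. One small correction: component-wise verifiability is not what makes the correctness test computable---that test is already a finite check on $n$-node graphs---but rather what justifies the lying step, since the $n$-node input $G$ is indistinguishable from a component of some $N$-node graph $G'$, and a valid output on $G'$ must restrict to a valid output on $G$ (the paper makes this explicit in \cref{sec:connected}).
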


\begin{proof}
    Consider any sufficiently large $n$, and let $N = 2^{n^2}$.
    In what follows, we lie to algorithm $\A_\Rand$ that the input graph consists of $N$ nodes.
    Hence, it runs in time $T := T_\Rand(N) = T_\Det(n)$ and succeeds with probability $1 - 1/N$.

    Let $\RR_n= \{f \colon \{0,1\}^{c \log n} \to \{0,1\}^\mathbb{N}\}$ be the family of all possible assignments of random bits streams to unique identifiers.
    For $f \in \RR_n$, we write $\A_\Rand[f]$ to denote the \emph{deterministic} LOCAL algorithm in which node $v$ runs $\A_\Rand$ but uses $f(\id(v))$ as its random bit stream.
    Note that $\A_\Rand$ is equivalent to the following process: choose $f \in \RR_n$ uniformly at random and apply $\A_\Rand[f]$.

    Let $\G_n$ be the set of all possible inputs $(G, \id, \inLbl)$, where $G$ is an $n$-node graph, $\id$ is a unique identifier assignment, and $\inLbl$ is an input labeling.
    We know that
    \begin{equation*}
        |\G_n| \leq 2^{\binom{n}{2}}\cdot 2^{c n \log n}\cdot |\inSigma|^n < N = 2^{n^2}
    \end{equation*}
    for a large enough $n$.
    We say that $f$ is \emph{good} if $\A_\Rand[f]$ outputs a valid solution for \emph{every} input in family $\G_n$.

    Now, we show there exists a good $f$. Let $F$ be a uniform random variable over $\RR_n$. Then
    \[
        \Pr(F\text{ is bad}) \leq \sum_{G \in \G_n} \Pr(\A_\Rand[F]\text{ fails on }G)
        =\sum_{G \in \G_n} \Pr(\A_\Rand\text{ fails on }G)
        \leq \frac{|\G_n|}{N} < 1.
    \]
    Therefore, $\Pr(F \text{ is good}) > 0$. Hence, there exists a good function; let $f$ be any such function.
    Thus, there is a deterministic algorithm $\Aexist = \A_\Rand[f]$ that solves $\PP$ on all inputs in $\G_n$ in at most $T$ rounds.

    Any deterministic $T$-round algorithm in the LOCAL model defines a mapping $\Anorm$ from radius-$T$ neighborhoods to local outputs. Conversely, such a mapping $\Anorm$ can be interpreted as a $T$-round algorithm. Furthermore, for a fixed $n$, there are only finitely many such mappings.

    Now $\A_\Det$ works as follows:
    Given $n$, each node first enumerates all candidate mappings $\Anorm$ in lexicographic order,
    checks if $\Anorm$ solves $\PP$ for every $\G_n$, and stops once the first such $\Anorm$ is found.
    Then $\A_\Det$ uses $T$ rounds so that each node $v$ learns its radius-$T$ neighborhood $N_T[v]$, and finally each node applies mapping $\Anorm$ to $N_T[v]$ to determine its local output.
\end{proof}

\section{Technicality: connected graphs}\label{sec:connected}

In the proof of \cref{thm:main-new}, a key step was that we lied about $n$. The algorithm cannot catch us lying, as the $n$-node input graph $G$ is indistinguishable from some hypothetical $N$-node input graph $G'$ in which one connected component is isomorphic to $G$. As $\PP$ was assumed to be component-wise verifiable, an algorithm that succeeds globally in $G'$ also has to succeed locally when restricted to $G$.

The proof heavily exploited graphs that may consist of multiple connected components. In this section we briefly note that this is \emph{not necessary}. We can prove the following version of \cref{thm:main-new} that holds even if $\A_\Rand$ only works correctly in connected graphs. However, component-wise problems are too broad class of problems in this case, and we consider locally verifiable problems instead:

\begin{theorem}\label{thm:main-conn}
    Let $\A_\Rand$ be a randomized LOCAL algorithm that solves a locally verifiable problem $\PP$ in $T_\Rand(n)$ communication rounds in $n$-node connected graphs with probability at least $1-1/n$. Then there is a deterministic LOCAL algorithm $\A_\Det$ that solves $\PP$ in $O(T_\Det(n))$ rounds, where
    $T_\Det(n) = T_\Rand\bigl(2^{n^2}\bigr)$.
\end{theorem}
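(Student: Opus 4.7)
The plan is to mimic the proof of \cref{thm:main-new}, but replace disjoint embeddings by a family of \emph{connected} extensions. For each $n$-node connected graph $G$ (with its identifiers and input labeling) and each attachment point $u \in G$, let $G'_{G,u}$ be the $N$-node connected graph obtained by attaching to $G$ at $u$ a fresh path of $N-n$ nodes, equipped with canonical identifiers (say $n+1,\dots,N$ along the path) and a fixed default input label. Let $\G'_n$ collect all these $(G,u)$-extensions; the counting of \cref{thm:main-new} gives $|\G'_n| \le n\cdot|\G_n| < N$ for $n$ large enough. The same union-bound argument then yields a mapping $f$ such that $\A_\Rand[f]$ solves $\PP$ on every $G'_{G,u}\in\G'_n$, and hence a normal-form mapping $\Anorm$ from radius-$T$ neighborhoods to outputs with this property exists and can be found by enumeration.

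The main obstacle is that, unlike in the disconnected case, $G$ is no longer a union of connected components of any $G'_{G,u}$, so running $\Anorm$ directly on $G$ need not give a valid output: for a node $v \in G$ close to $u$, its $T$-view in $G'_{G,u}$ differs from its $T$-view in $G$. The correctness of $\Anorm$ on $G'_{G,u}$ transfers to the $r$-ball around $v$ in $G$ exactly when we can choose $u$ at distance greater than $T+r$ from $v$ in $G$, where $r$ is the verification radius of $\PP$. Such a $u$ exists precisely when $v$ has eccentricity greater than $T+r$ in $G$, which is guaranteed for \emph{every} $v$ as soon as the diameter of $G$ exceeds $2(T+r)$; small-diameter graphs must therefore be handled separately.

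Concretely, $\A_\Det$ has each node $v$ first gather its radius-$(2(T+r)+1)$ neighborhood in $G$, which takes $O(T)$ rounds since $r$ is a constant depending only on $\PP$. If $v$'s view reaches distance $2(T+r)+1$, or if it does not but the observed diameter of $G$ still exceeds $2(T+r)$, then $v$ outputs $\Anorm$ applied to its $T$-view in $G$. Otherwise $v$ has seen all of $G$ and has verified that the diameter is at most $2(T+r)$; in this case $v$ outputs the label assigned to it by the lexicographically first valid solution of $\PP$ on $G$, which exists because $\A_\Rand$ succeeds on $G$ with positive probability.

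Correctness follows from a clean dichotomy on the diameter of $G$. If the diameter is at most $2(T+r)$, every node's view covers $G$, every node agrees on the same canonical solution, and the output is trivially valid. If the diameter exceeds $2(T+r)$, then every node has eccentricity strictly greater than $T+r$ (eccentricity is at least half the diameter), so every node lands in the $\Anorm$ branch; for each $v$ we may then pick a witness $u \in G$ at distance greater than $T+r$ from $v$, and with this choice both $N_r[v]$ and the $T$-views of all nodes in $N_r[v]$ coincide in $G$ and in $G'_{G,u}$, so by local verifiability $\Anorm$'s output on $N_r[v]$ inherits its validity from $G'_{G,u}$. The total running time is $O(T_\Det(n))$, as required.
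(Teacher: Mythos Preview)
Your proof is correct and follows essentially the same approach as the paper: both handle small graphs by gathering the whole graph and brute-forcing a solution, and for large graphs both exploit that every node has eccentricity greater than $T+r$, so an $n$-node connected $G$ can be extended to a connected $N$-node graph by attaching new vertices at a point far from any given $v$, without disturbing the radius-$T$ views in $N_r[v]$. The only differences are organizational (the paper uses the embedding to bound the failure probability of $\A_\Rand$ on $G$ directly and then union-bounds over $\G_n$, whereas you union-bound over the explicit family $\G'_n$ of extensions and then transfer validity back to $G$) and in the dichotomy threshold (radius $\le T+r$ versus diameter $\le 2(T+r)$); one cosmetic slip is that the path identifiers should be chosen disjoint from $\{1,\dots,n^c\}$ rather than starting at $n+1$.
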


\begin{proof}
    Let $t = T_\Det(n) + r$, where $r$ is the verification radius of problem $\PP$.
    In algorithm $\A_\Det$, each node $v$ first explores its radius-$t$ neighborhood to determine if the entire input graph $G$ is contained in $N_t[v]$. If yes, we spend another $t$ rounds to inform all nodes about $G$. In this case all nodes have learned $G$, and we can solve $\PP$ by brute force and stop.

    Otherwise, we can proceed as we did in the proof of \cref{thm:main-new}. We can now safely lie about $N$. To see this, assume that $\A_\Rand$ fails in some $n$-node graph $G$ with probability more than $n/N$ if we lie that $G$ has $N$ nodes. Then the algorithm also has to fail locally in the radius-$r$ neighborhood of some node $v$ with probability more than $1/N$. Now it is possible to construct an $N$-node graph $G'$ with node $v'$ such that radius-$t$ neighborhood of $v$ in $G$ is isomorphic to the radius-$t$ neighborhood of $v'$ in $G'$ (here we exploit the fact that radius-$t$ neighborhood of $v$ does not contain the entire graph $G$). As radius-$t$ neighborhoods of $v$ and $v'$ agree, and the running time of $\A_\Rand$ is $t-r$ rounds, the output distributions of $N_r[v]$ and $N_r[v']$ also agree. Now it follows that $\A_\Rand$ fails locally in the radius-$r$ neighborhood of $v'$ in $G'$ with probability more than $1/N$, and hence it also fails globally in $G'$ with probability more than $1/N$, which is a contradiction with the assumption that $\A_\Rand$ solves $\PP$ in connected $N$-node graphs with probability at least $1-1/N$.

    Now as long as we choose a large enough $n$ such that $|\G_n| < N/n$, the rest of the proof of \cref{thm:main-new} goes through.
\end{proof}

\bibliographystyle{alphaurl}
\bibliography{bibliography}

\begin{thebibliography}{CKP19}

\bibitem[CKP19]{shattering}
Yi-Jun Chang, Tsvi Kopelowitz, and Seth Pettie.
\newblock An exponential separation between randomized and deterministic
  complexity in the local model.
\newblock {\em SIAM Journal on Computing}, 48(1):122--143, 2019.
\newblock \href {https://doi.org/10.1137/17M1117537}
  {\path{doi:10.1137/17M1117537}}.

\bibitem[Lin92]{linial}
Nathan Linial.
\newblock Locality in distributed graph algorithms.
\newblock {\em SIAM Journal on Computing}, 21(1):193--201, 1992.
\newblock \href {https://doi.org/10.1137/0221015} {\path{doi:10.1137/0221015}}.

\bibitem[NS95]{naor-stockmeyer1995}
Moni Naor and Larry~J. Stockmeyer.
\newblock What can be computed locally?
\newblock {\em SIAM Journal on Computing}, 24(6):1259--1277, 1995.
\newblock \href {https://doi.org/10.1137/S0097539793254571}
  {\path{doi:10.1137/S0097539793254571}}.

\bibitem[Pel00]{peleg_book}
David Peleg.
\newblock {\em Distributed Computing: A Locality-Sensitive Approach}.
\newblock SIAM, 2000.

\bibitem[Suo20]{suomela2020landscape}
Jukka Suomela.
\newblock Landscape of locality (invited talk).
\newblock In {\em Proc.\ SWAT}, 2020.
\newblock URL: \url{https://jukkasuomela.fi/landscape-of-locality/}, \href
  {https://doi.org/10.4230/LIPIcs.SWAT.2020.2}
  {\path{doi:10.4230/LIPIcs.SWAT.2020.2}}.

\end{thebibliography}

\end{document}